\newtheorem{theorem}{Theorem} 
\newtheorem{corollary}[theorem]{Corollary} 
\newcommand{\figlab}[1]{\label{fig:#1}}
\newcommand{\figref}[1]{\ref{fig:#1}}
\newcommand{\theolab}[1]{\label{theorem:#1}}
\newcommand{\theoref}[1]{\ref{theorem:#1}}
\newenvironment{proof}{\noindent\textbf{Proof: }\ignorespaces}
  {\hspace*{\fill}$\Box$\medskip}
\def\T{\mathbb{T}}
\newif\ifabstract
\newif\iffull
 \gdef\xxxmark{%
   \expandafter\ifx\csname @mpargs\endcsname\relax 
     \expandafter\ifx\csname @captype\endcsname\relax 
       \marginpar{xxx}
     \else
       xxx 
     \fi
   \else
     xxx 
   \fi}
 \gdef\xxx{\@ifnextchar[\xxx@lab\xxx@nolab}
 \long\gdef\xxx@lab[#1]#2{{\bf [\xxxmark #2 ---{\sc #1}]}}
 \long\gdef\xxx@nolab#1{{\bf [\xxxmark #1]}}
 \long\gdef\xxx@lab[#1]#2{}\long\gdef\xxx@nolab#1{}%
\title{A Universal Crease Pattern \\ for Folding Orthogonal Shapes}
\author{%
  \begin{tabular}{c@{\qquad\quad}c}
    Nadia M. Benbernou%
     \thanks{MIT Computer Science and Artificial Intelligence Laboratory,
       32 Vassar St., Cambridge, MA 02139, USA,
       \protect\url{{nbenbern,edemaine,mdemaine,avivo}@mit.edu}}
  &
    Erik D. Demaine\footnotemark[1]
      \thanks{Partially supported by NSF CAREER award CCF-0347776,
        DOE grant DE-FG02-04ER25647, and AFOSR grant FA9550-07-1-0538.}
  \\[\smallskipamount]
    Martin L. Demaine\footnotemark[1]
  &
    Aviv Ovadya\footnotemark[1]
  \end{tabular}
}
\date{}
\begin{document}
\maketitle

\begin{abstract}
  We present a universal crease pattern---known in geometry as 
  the \emph{tetrakis tiling} and in origami as \emph{box pleating}---that
  can fold into any object made up of unit cubes joined face-to-face
  (\emph{polycubes}).
  More precisely, there is one universal finite crease pattern
  for each number $n$ of unit cubes that need to be folded.
  This result contrasts previous universality results for origami,
  which require a different crease pattern for each target object,
  and confirms intuition in the origami community that box pleating is a
  powerful design technique.
\end{abstract}


\xxx{Gadgets + thms with slits}

\xxx{motions: does 1 gadget fold rigidly? rigid origami simulator? do slits make it easier...still global intersect issue}

\xxx{future: 0/1 matrix; general height map}

\section{Introduction}

An early result in computational origami is that every polyhedral surface
can be folded from a large enough square of paper
\cite{Demaine-Demaine-Mitchell-2000}.
But each such folding uses a different crease pattern.
Into how many different shapes can a single crease pattern fold?

Our motivation is developing programmable matter out of a foldable sheet.
The idea is to statically manufacture a sheet with specific creases,
and then dynamically program how much to fold each crease in the sheet.
Thus a single manufactured sheet can be programmed to fold into anything that
the single crease pattern (or a subset thereof) can fold.

We prove a universality result: a single $n \times n$ crease pattern
can fold into all face-to-face gluings of $O(n)$ cubes.
Thus, by setting the resolution $n$ sufficiently large,
we can fold any 3D solid up to a desired accuracy.


Our crease patterns are finite (rectangular) portions
of a single infinite tiling.
The \emph{tetrakis tiling} \cite{Gruenbaum-Shephard-1987}
is formed from the unit square grid by subdividing
each square in half vertically, horizontally, and by the two diagonals,
forming eight right isosceles triangles; see Figure~\figref{tetrakis}.
Note the scaling: one unit is the side length of an original square.
Equivalently, the tetrakis tiling can be formed from a half-unit square grid
with squares filled alternately with positive- and negative-slope diagonals.

\begin{figure}
  \centering
  \includegraphics[scale=0.6]{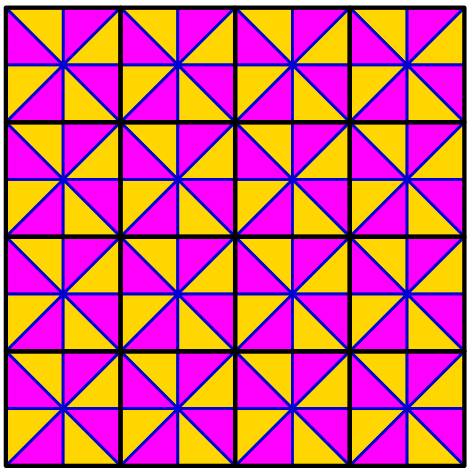}
  \caption{A $4 \times 4$ region of the tetrakis tiling.}
  \figlab{tetrakis}
\end{figure}

A \emph{tetrakis crease pattern} is a rectangular region (with integer
coordinates) of the tetrakis tiling.
Tetrakis crease patterns are similar to a style of origami
called \emph{box pleating} in which all creases are horizontal,
vertical, or diagonal, but their endpoints may not lie on the integer grid.
Box pleating was developed and popularized by Neal Elias in the 1960s
\cite{Kirschenbaum-Elias}, and explored more mathematically in
\cite{Lang-2003-secrets}.
Our universality result may have been suspected by origami artists,
but has not been proved until now.

\section{Definitions}

We start with a few definitions about origami,
specified somewhat informally for brevity.
For more formal definitions, see \cite[ch.~11]{Demaine-O'Rourke-2007}.

For our purposes, a \emph{piece of paper} is a two-dimensional surface
(formally, a metric 2-manifold).
A \emph{crease pattern} is graph drawn on the piece of paper
with straight edges and no crossings; edges are called \emph{creases}.
An \emph{angle assignment} is an assignment of real numbers in
$[-180^\circ,+180^\circ]$ to creases in the crease pattern,
specifying a fold angle (negative for valley, positive for mountain).
We allow a crease to be assigned an angle of~$0$,
in which case we call the crease \emph{trivial},
though we do not draw trivial creases in figures.
A crease pattern and angle assignment determine the 3D geometry of a folded
state, where each face maps as a 3D polygon via a rigid motion (isometry).
A \emph{folded state} consists of this geometry together with an
\emph{overlap order} defining the stacking relationship among faces of the
crease pattern that touch in the 3D geometry, allowing the paper to touch
but not cross itself.
We will specify such overlap orders visually using diagrams
that exaggerate the infinitesimal space between layers.

A \emph{polycube} is an interior-connected union of unit cubes from the
unit cube lattice.  The \emph{dual graph} of a polycube has a vertex
for each unit cube and an edge between two vertices whose corresponding
cubes share a face.
By the interior-connected property, the dual graph is connected.
The \emph{faces} of the polycube are the (square) faces of the individual cubes
that are not shared by any other cubes.

A \emph{folding of a polycube} is a folded state that covers all faces
of the polycube, and nothing outside the polycube.  In particular, we
allow the folded state to cover squares of the cubic lattice
interior to the polycube.  (Indeed, our foldings cover all such squares.)
A face of the folded polycube is \emph{seamless} if the outermost layer of
paper covering it is an uncreased unit square of paper.





\section{Folding Polycubes}

In this section, we describe an algorithm for folding a given $n$-cube
polycube from a square sheet of paper with crease pattern equal
to an $O(n) \times O(n)$ region of the tetrakis tiling.

%

\begin{theorem} \theolab{rectangle seam}
  Any polycube of $n$ cubes can be folded from a tetrakis crease pattern
  on a $(4 n + 1) \times (2 n + 1)$ rectangle of paper,
  with all faces seamless and made from one side of the paper,
  except for one specified face which has seams.
\end{theorem}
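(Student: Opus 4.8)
The plan is to route a single long strip of paper along the surface of the polycube, wrapping each cube in turn, and to realize every fold of this strip as a crease of the tetrakis tiling. The two families of tetrakis creases play complementary roles: the horizontal and vertical creases let the strip make the $90^\circ$ dihedral bends needed to wrap the six faces of a box, while the diagonal creases let the strip turn by $90^\circ$ within its own plane, reflecting its direction of travel. With these two primitives the strip can be steered along any prescribed route through the cube lattice, which is what makes a single family of crease patterns universal.

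First I would fix a traversal of the polycube. Since the dual graph is connected, it has a spanning tree, and an Euler tour of the doubled tree is a closed walk that enters and leaves each cube a bounded number of times and uses at most $2(n-1)$ dual edges. I would order the cubes, and the transitions between dual-adjacent cubes, according to this walk; this ordering is the route the strip will follow.

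Next I would build two gadgets as fixed rectangular blocks of the tetrakis pattern. A \emph{cube gadget} wraps a single cube so that each exposed face is covered, on its outermost layer, by an uncreased unit square (hence seamless and from one side of the paper), with the strip entering along one designated edge and exiting along another. A \emph{transition gadget} carries the strip from the exit of one cube gadget to the entry of the next, using diagonal creases to reorient the strip toward whichever of the up to five neighboring lattice positions the next cube occupies. Laying these blocks out in serpentine fashion along the traversal, together with pleats that tuck away the paper not currently covering a face, yields a crease pattern inside the $(4n+1)\times(2n+1)$ rectangle: the factor $\sim 2n$ in the length accounts for the $\le 2(n-1)$ cube-visits of the Euler tour, each occupying a block of bounded length, while the factor $\sim 2n$ in the width accounts for the hidden layers, which accumulate as the strip advances. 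The two ends of the strip are tucked at a single chosen face, which is therefore the one face permitted to carry seams.

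The main obstacle is global validity of the resulting folded state, not the local design of the gadgets. Locally each gadget folds as intended, but I must certify that the assembled strip never crosses itself in 3D and admits a consistent overlap order, and simultaneously that the pleats hiding the excess paper never surface on an exposed face, so that seamlessness is preserved everywhere except the designated face. I would establish this with an inductive layering invariant: after wrapping the first $k$ cubes, all excess paper lies in a controlled stack of inner layers disjoint from the still-uncovered faces, so the next cube gadget and transition gadget can be inserted without conflict. Verifying this invariant, and checking that the accumulated stacks stay within the width budget $2n+1$, is the crux of the argument.
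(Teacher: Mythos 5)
You explicitly defer what you yourself call the crux---global non-self-intersection, a consistent overlap order, and preservation of seamlessness---to an ``inductive layering invariant'' that is never formulated precisely, let alone proved, and none of the three load-bearing components of the plan is actually constructed: the cube gadget, the transition gadget, and the serpentine layout with pleats. The difficulties are real, not cosmetic. With a strip routed by an Euler tour of the doubled spanning tree, each cube is entered up to degree-many times, so it is unspecified which visit covers which exposed face; and the strip's own creases from $45^\circ$ in-plane turns must end up buried beneath a later, uncreased unit square on every face but one---a global ordering constraint that no purely local gadget can certify. Your width accounting also conflates two different resources: hidden \emph{layers} cost only thickness, which is free, whereas the pleats that park excess paper consume actual paper width, and you give no mechanism keeping that consumption within the $2n+1$ budget. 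As written, the proposal is a program with the hard steps flagged rather than carried out.

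The paper takes a different route that dissolves the global problem instead of confronting it. It inducts on $n$: pick a spanning tree of the dual graph, remove a leaf cube $l \neq b$, and fold the remaining $(n-1)$-cube polycube by induction on a $(4(n-1)+1) \times (2(n-1)+1)$ pattern. The induction hypothesis guarantees the attachment face $f$ is seamless, i.e.\ covered outermost by a single uncreased unit square $s$ of the sheet; one then inserts four columns and two rows around $s$, places there the $5 \times 3$ single-cube pattern of Figure~\figref{OneCubeCrease} together with buffer creases reflecting the creases of row $i$ and column $j$ (Figure~\figref{InsertionStep}), folds the inserted gadget first---yielding a flat sheet with cube $l$ extruded and seamless on top elsewhere---and then replays the old folding treating $l$ as the square $s$. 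Validity of the folded state and seamlessness are thus \emph{inherited} from the induction hypothesis rather than re-argued globally; seamlessness of $f$ is exactly the invariant that makes the local insertion legal, which is the role your unproven layering invariant was meant to play. Note also that the dimensions $(4n+1) \times (2n+1)$ fall out as $5 \times 3$ plus $4 \times 2$ per added cube, not from tour-length bookkeeping; to salvage your approach you would need to exhibit the gadgets in the tetrakis tiling and prove the invariant, a substantially heavier burden than the paper's leaf-removal induction.
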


\begin{proof}
  The base case is $n=1$.
  Figures~\figref{OneCubeCrease} and \figref{Extrusion} show the crease
  pattern and folded state, respectively, for a single cube folded from a
  $5 \times 3$ rectangular sheet of paper, which is within the desired bound.
  For this base case, we fold just the shaded $5 \times 3$ part of the crease
  pattern in Figure~\figref{OneCubeCrease}, making exactly the desired cube.
  Although hidden in Figure~\figref{Extrusion}, the bottom face of the
  cube is indeed covered, with seams.  All other faces are seamless.

\begin{figure}
\centering
\includegraphics[width=\linewidth]{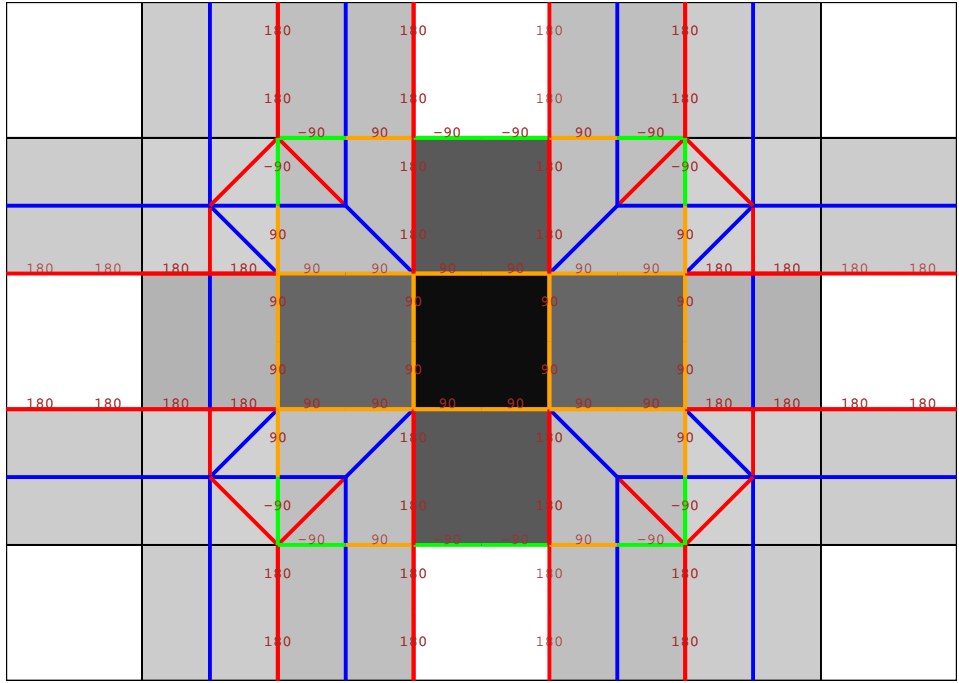}
\caption{Crease pattern for a folding a single unit cube.  Red line segments
  are mountain folds by $180^\circ$, orange segments are $90^\circ$ mountain
  folds, green segments are $90^\circ$ valley folds, and blue segments are
  $180^\circ$ valley folds.}
\figlab{OneCubeCrease}
\end{figure}

\begin{figure}
\centering
\includegraphics[width=\linewidth]{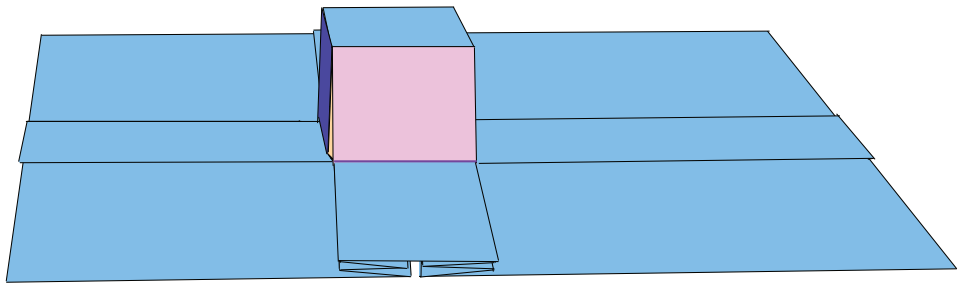}
\caption{Folding of a unit cube from the crease pattern
         in Figure~\protect\figref{OneCubeCrease}.}
\figlab{Extrusion}
\end{figure}

  It remains to prove the inductive step.

  Consider a polycube $P$ of $n$ cubes
  and a specified face $g \in P$ for seams.
  Let $b$ be the unique cube having $g$ as a face.
  Let $T$ be a spanning tree of the dual graph of~$P$,
  Because every tree has at least two leaves,
  $T$ has a leaf corresponding to a cube $l \neq b$.
  Let $t$ be the unique cube sharing a face with~$l$,
  and let $f$ be the face shared by $t$ and~$l$.

  Now consider the polycube $P' = P \setminus \{l\}$, with $n-1$ cubes.
  Because $l \neq b$, $g$ remains a face of~$P'$.
  By induction, the $(4(n-1)+1) \times (2(n-1)+1)$ tetrakis crease pattern
  folds $C'$ into $P'$ with an angle assignment~$A'$,
  without seams on all faces but~$g$.
  By symmetry we can assume that all faces are made
  from the top side of the paper.

  We modify $(C',A')$ as follows to obtain an angle assignment $A$ for the
  $(4 n + 1) \times (2 n + 1)$ tetrakis crease pattern $C$ folding into~$P$,
  without seams on all faces but~$g$.
  Because $f$ is a face of $P'$ and $f \neq g$,
  the folding of $(C',A')$ has $f$ seamless.
  Hence there is a unique unit square $s$ of $C'$
  corresponding to the outermost layer of paper covering $f$.
  Suppose that $s$ lies in row $i$ and column $j$ of~$C'$.
  We insert two columns to the left of column~$j$, two columns to the right of
  column~$j$, a single row above row~$i$, and a single row below row~$i$,
  and add the crease pattern shown in Figure~\figref{InsertionStep}.
  In particular, we have replaced $s$ by the $5 \times 3$ crease pattern for a
  unit cube, and we filled the remainder of the added rows and columns
  with the creases shown in Figure~\figref{OneCubeCrease},
  We also reflect the creases in the original row $i$ into the added rows,
  and similarly for the creases in the original column $j$, as described below.
  The new crease pattern $C'$ has width $(4(n-1)+1)+4 = 4 n + 1$
  and height $(2(n-1)+1)+2 = 2 n + 1$ as desired.

\begin{figure}[htbp]
\centering
\includegraphics[width=\linewidth]{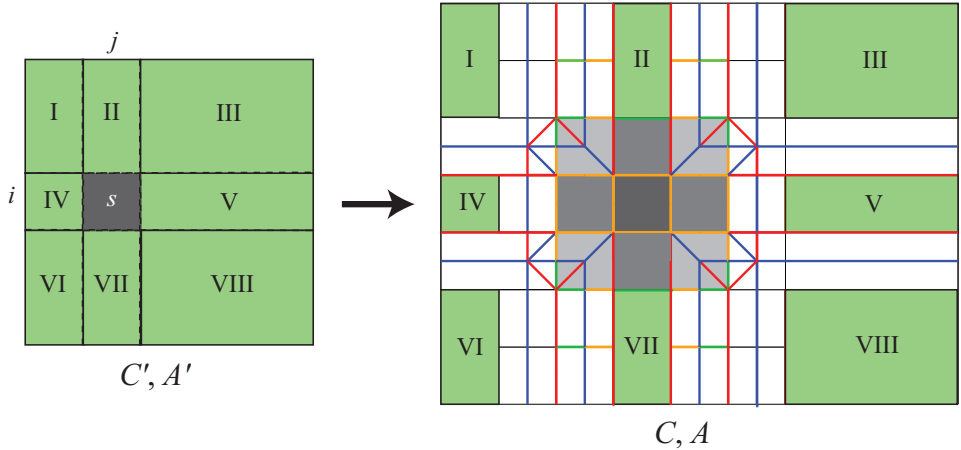}
\caption{Modifying $(C',A')$ (green) to produce $(C,A)$.
  Crease coloring is the same as Figure~\protect\figref{OneCubeCrease}.}
\figlab{InsertionStep}
\end{figure}

  Finally we show that the constructed crease pattern $C$ and
  angle assignment $A$ fold into the desired polycube~$P$.
  We construct the folded state in two steps.
  First, we fold using just the inserted crease pattern,
  producing a sheet with a cube $l$ sticking out in place of~$s$,
  as in Figure~\figref{Extrusion}.
  As mentioned in the base case, all faces of $l$ except $f$
  are seamless in this folding.
  Furthermore, all other unit squares of the sheet
  are seamless on the top side.
  Second, we apply the folding of $(C',A')$ to this folded object,
  pretending that the cube $l$ was just the unit square~$s$.
  Because $(C',A')$ does not fold~$s$, the folding of $(C',P')$ still works.
  The only difference is that all folds in row $i$ and column $j$
  apply now to three layers, not just one, causing additional creases
  in the inserted rows and columns.
  Because all faces of $P'$ are made from the top side of the paper,
  the folding remains seamless on all faces of $P'$ except $f$ and~$g$.
  Face $f$ is now the cube~$l$, which means that we have folded~$P$.
\end{proof}

A simple extension makes the folding entirely seamless:

\begin{corollary} \theolab{rectangle seamless}
  Any polycube of $n$ cubes can be folded from a tetrakis crease pattern
  on a $(4 n + 1) \times (2 n + 2)$ rectangle of paper,
  with all faces seamless and made from one side of the paper.
\end{corollary}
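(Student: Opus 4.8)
The plan is to build directly on Theorem~\theoref{rectangle seam}. That theorem already gives a folding of the polycube $P$ in which every face is seamless except for one specified face $g$, which carries seams. The only defect to repair is this single seamed face, and the corollary buys exactly two extra units of height ($2n+2$ instead of $2n+1$) and one extra unit of width ($4n+1$ is unchanged, so actually only height grows) to pay for the repair. So the whole task reduces to: cover the one seamed face $g$ with a clean, uncreased unit square of paper drawn from the newly added strip.

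First I would invoke Theorem~\theoref{rectangle seam} to fold $P$ into a folded state on the $(4n+1)\times(2n+1)$ rectangle, choosing the seamed face $g$ to be a conveniently located face of the polycube—ideally an outermost face on the boundary of the final 3D shape, so that an extra flap of paper can be wrapped over it without colliding with the rest of the model. Next I would append one additional row to the rectangle (raising the height from $2n+1$ to $2n+2$), filled with the tetrakis pattern, so that the enlarged sheet is still a genuine tetrakis crease pattern. The key geometric step is to route a single uncreased unit square of this extra strip so that, under the existing angle assignment $A$ augmented by a few $180^\circ$ folds along the boundary of $g$, that square lands flush on top of the seamed face, becoming its new outermost layer. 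Because the covering square is itself uncreased, the definition of \emph{seamless} is then satisfied at $g$ as well, and all other faces remain seamless since we have only added paper, not disturbed their outermost layers.

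I would then verify the two conditions the corollary demands. For seamlessness, the argument is the one just sketched: the appended uncreased square is the new outermost layer on $g$, and every previously seamless face is untouched. For the ``made from one side of the paper'' condition, I must check that the flap covering $g$ presents the same side of the paper as all the other faces; this is arranged by choosing the parity of the number of $180^\circ$ folds used to deliver the flap (each such fold flips the visible side), so an even count—or an odd count paired with the right initial orientation—makes the covering square show the top side, consistent with the rest of the model.

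The hard part will be the noncrossing check: I must confirm that the added flap, as it wraps from the appended strip over to face $g$, does not force the paper to pass through itself, and that a valid overlap order still exists globally. Since Theorem~\theoref{rectangle seam} already supplies a consistent folded state with an overlap order, the remaining work is local—inserting the new flap's layers into the stacking order at face $g$ without violating the existing relations—and this is where a careful diagram of the layer structure near $g$, in the style of Figure~\figref{Extrusion}, does the real work. Once that local insertion is shown to respect the overlap order, the constructed folded state witnesses the corollary.
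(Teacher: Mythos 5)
Your proposal is correct and takes essentially the same route as the paper: the paper likewise appends a single extra unit strip (taking $2n+1$ to $2n+2$), extends the nontrivial creases of the Theorem~\theoref{rectangle seam} folding into it so the strip rides along with the existing angle assignment, and then folds the resulting uncreased unit square flap over the seamed face to make it the new outermost layer. Your added attention to the parity of the visible side and to inserting the flap into the overlap order is slightly more explicit than the paper's terse verification, but the construction is the same.
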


\begin{proof}
  Let $P$ be a polycube of $n$ cubes and let $f$ be any of its faces.
  We compute the folding from Theorem~\theoref{rectangle seam}
  of a $(4 n + 1) \times (2 n + 1)$ rectangle into~$P$,
  seamless except for~$f$.
  We add an extra column on the right,
  extending any nontrivial horizontal creases into this column.
  The resulting crease pattern and angle assignment fold into
  the desired polycube, with an extra seamless square
  attached along an edge of~$f$.
  We fold the seamless square on top of $f$
  to obtain an entirely seamless folding.
\end{proof}

Finally we show that a slightly more careful construction improves the
size of the required square of paper.

\begin{theorem} \theolab{cube}
  Any polycube of $n$ cubes can be folded from a tetrakis crease pattern
  on a square of paper of side length $3 n + 2$,
  with all faces seamless and made from one side of the paper.
\end{theorem}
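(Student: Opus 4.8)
The plan is to recycle the inductive construction behind Theorem~\theoref{rectangle seam}, changing only how much the paper grows in each direction as cubes are added. In that construction, inserting a cube replaces one unit square by the $5 \times 3$ single-cube gadget of Figure~\figref{OneCubeCrease}, adding four columns and two rows. The crucial observation is that both the tetrakis tiling and the folded single cube are invariant under a $90^\circ$ rotation, so we may insert each cube in a \emph{horizontal} orientation (the $5 \times 3$ gadget, $+4$ columns and $+2$ rows) or in a \emph{vertical} orientation (the $3 \times 5$ gadget, $+2$ columns and $+4$ rows), as we please. Rotating the gadget rotates the extruded cube about its vertical axis, under which a cube is symmetric; the attachment to the rest of the sheet and the correctness argument of Theorem~\theoref{rectangle seam} are therefore unchanged, and only the reflected creases in the inserted rows and columns swap roles. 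In particular, the choice of orientation is free and independent of the geometry of the polycube.

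First I would balance the two dimensions. If we use $h$ horizontal and $v = n - h$ vertical insertions, the final crease pattern has width $W = 1 + 4h + 2v = 1 + 2n + 2h$ and height $H = 1 + 2h + 4v = 1 + 4n - 2h$, recovering $W = 4n+1$ and $H = 2n+1$ when $h = n$. Note that $W + H = 6n + 2$ regardless of the split, so the total slack toward a $(3n+2) \times (3n+2)$ square (whose dimensions sum to $6n+4$) is always exactly two units. Both $W \le 3n+2$ and $H \le 3n+2$ hold exactly when $(n-1)/2 \le h \le (n+1)/2$, and choosing $h = \lfloor n/2 \rfloor$ gives $(W,H) = (3n+1, 3n+1)$ for even $n$ and $(W,H) = (3n, 3n+2)$ for odd $n$, so $\max(W,H) \le 3n+2$ in all cases.

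Next I would square off the paper and seal the single remaining seam using the padding idea of Corollary~\theoref{rectangle seamless}. Since $W + H = 6n+2$ while a $(3n+2)\times(3n+2)$ square has dimensions summing to $6n+4$, exactly two unit-wide strips (two columns for odd $n$, one column and one row for even $n$) must be added to reach the square. As in the Corollary, adding each strip and extending the nontrivial creases through it folds the strip consistently with the existing folding and leaves only a seamless unit square hanging off a boundary edge. Folding one such square onto the seam face removes the seam, and folding the other onto any boundary face leaves that face seamless as well, since its outermost layer is then an uncreased unit square. The result is an exact $(3n+2)\times(3n+2)$ square folding into $P$ with every face seamless and made from one side of the paper.

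The step I expect to require the most care is this seamless disposal of the added paper: verifying that extending the creases through each padding strip really does fold it consistently, so that no crease is ever exposed as the outermost layer over a polycube face. This is precisely the behavior Corollary~\theoref{rectangle seamless} relies on for a single added column, and the balancing above is arranged so that only unit-wide strips are ever added, keeping us within that already-handled case rather than forcing a harder wide-strip fold. The supporting symmetry claim---that each cube may be inserted in either orientation---is intuitively clear from the $90^\circ$ invariance of the tetrakis tiling and of Figure~\figref{OneCubeCrease}, but also deserves a careful check of the reflected creases in the inserted rows and columns for the vertical orientation.
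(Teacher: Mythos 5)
Your proposal is correct and takes essentially the same route as the paper: the paper's proof of Theorem~\theoref{cube} likewise alternates the insertion gadget of Theorem~\theoref{rectangle seam} with its $90^\circ$ rotation (adding four rows and two columns in odd steps, two rows and four columns in even steps, i.e., three of each on average, exactly your balanced choice $h=\lfloor n/2\rfloor$), reaching $(3n+1)\times(3n+1)$ for even $n$ and $3n\times(3n+2)$ for odd $n$, and then folds the leftover unit-wide strips over in sequence, as in Corollary~\theoref{rectangle seamless}, to remove the remaining seam. Your version is somewhat more explicit than the paper's (the exact dimension bookkeeping, the range of admissible $h$, and the $90^\circ$ symmetry justification for rotating the gadget), but it is the same argument.
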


\begin{proof}
  We follow the same construction as Theorem~\theoref{rectangle seam},
  but through the induction on~$n$, we alternate between the same modification
  and the $90^\circ$ rotation of the modification.  In other words,
  in odd steps we add four rows and two columns, and in even steps
  we add two rows and four columns.  Thus we add three rows and columns
  on average per step, starting from a $5 \times 3$ rectangle.
  For $n$ odd, we have an additional row, which is accounted for by the $+2$
  (instead of $+1$).
  In all cases, we have an additional column of paper, and for $n$ even,
  we have an additional row as well.  Folding these over in sequence,
  similar to Corollary~\theoref{rectangle seamless}, removes the seams
  from the last face.
\end{proof}


Note that these bounds are tight up to constant factors for square paper,
as folding an $n\times 1\times 1$ tower of unit cubes requires starting from
a square of side length $\Omega(n)$ in order to have diameter $\Omega(n)$,
because the diameter of the tower is $n$ and folding can only decrease
diameter.


\iffull

\section{Folding Arbitrary Orthogonal Shapes}

Let $\L^3$ denote a partitioning of the Euclidean 3-space by orthogonal planes. Let $R$ be a collection of rectangular boxes in $\L^3$ (where one rectangular prism corresponds to a volume completely enclosed by partitioning planes). 
Define the \emph{dual graph of $R$} as follows:  
There is one node per prism, and two nodes are joined via an edge if their corresponding prisms in $R$ share a face. We say that $R$ is \emph{connected} if its dual graph is connected. A \emph{polyprism} refers to a connected collection of a prisms.  

Let $P$ be a polyprism.  A face $f$ in $P$ is \emph{exposed} if it is the face of exactly one prism in $P$; in other words, no pair of prisms in $P$ shares $f$.  Let $C(P)$ be a crease pattern in $\T$ which folds to $P$.  A face in the folding of crease pattern $C(P)$ is \emph{seamless} if the outermost rectangle in the folding of that face has no creases. For any polyprism $P$, we will find a crease pattern $C(P)$ in $\T$ such that $C(P)$ folds to $P$. Only one exposed face of $P$ will not be seamless in the folding of $C(P)$, and we call this face the \emph{base face}. 

\begin{theorem}
Any polyprism \xxx{do this} cube $P$ of size $n$ can be folded from a square sheet of paper with side length $(3n+1)$ for even $n$ and $(3n+2)$ for odd $n$.  Moreover the crease pattern used is a subgraph of the tetrakis tiling.
\theolab{PolycubeResult}
\end{theorem}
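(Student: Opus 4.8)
The plan is to establish the polyprism analogue of Theorem~\theoref{cube} by the same inductive extrusion, now run on the dual graph of the polyprism $P$ instead of a polycube and with the unit-cube gadget replaced by a gadget that extrudes a rectangular box of arbitrary integer dimensions. Fixing the base face $g$ on its unique prism $b$, I would take a spanning tree of the dual graph of $P$ and, since every tree has at least two leaves, choose a leaf prism $l \neq b$; let $f$ be the face $l$ shares with its neighbor. Deleting $l$ yields a smaller polyprism $P'$ still containing $g$, which folds by induction from a tetrakis pattern $(C',A')$ that is seamless everywhere but $g$. Seamlessness of $f$ in $P'$ means $f$ is covered by an uncreased rectangle $s$ of $C'$, so I would grow the box $l$ directly out of $s$ by inserting the box gadget in place of $s$, exactly as a single cube was grown out of a unit square in Theorem~\theoref{rectangle seam}. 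Refolding $(C',A')$ afterward---treating the protruding box as though it were still the flat rectangle $s$---then produces all of $P$, with the only new seams landing on $f$, which has become interior.

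The hard part will be the box gadget itself: given a flat $a \times b$ footprint embedded in an otherwise flat sheet, I need creases drawn entirely from the tetrakis tiling $\T$ that extrude an $a \times b \times c$ box while leaving the surrounding paper flat and seamless on top. The pattern of Figure~\figref{OneCubeCrease} is the case $a = b = c = 1$, and the natural generalization gathers the four side walls and the top face from a surrounding halo of rows and columns using box pleats laid out at the half-unit resolution of the tetrakis grid. Three things must be checked, and I expect the last to be the real difficulty: (i) that the pleats close up into a watertight box of the prescribed dimensions; (ii) that every crease used lies in $\T$; and (iii) that the sheet never crosses itself once the gadget is combined with the rest of the folding. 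Condition (iii) is delicate precisely because, as in the unit case where the folds of row $i$ and column $j$ came to act on three layers rather than one, the inserted halo shares rows and columns with all the layers already committed by $(C',A')$, so the global overlap order must be shown consistent rather than merely each gadget being locally valid.

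Finally I would track the paper size through the induction. As in Theorem~\theoref{cube}, alternating the orientation of the inserted halo between odd and even steps makes it consume three rows and columns per step on average, starting from a single-box base case that generalizes the $5 \times 3$ unit pattern; summing over all boxes gives side length $3n + O(1)$. Pinning the additive constant down to $3n + 1$ for even $n$ and $3n + 2$ for odd $n$ is then a careful parity count of the one leftover row or column, which is folded over to erase the seam on the base face, exactly as in Corollary~\theoref{rectangle seamless}. Because every gadget, halo, and cleanup fold is built from horizontal, vertical, and diagonal creases of the tetrakis grid, the resulting crease pattern is a subgraph of $\T$, giving the ``moreover'' clause for free.
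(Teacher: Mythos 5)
You should know at the outset that the paper contains no proof of this theorem: it sits in an unfinished section, the statement itself is garbled (``Any polyprism \ldots cube $P$ of size $n$'') and carries an unresolved editorial to-do, and the subsequent ``Variations'' notes never return to it. So there is no route to compare yours against; measured instead against the completed polycube machinery, your outline is the natural one---spanning tree of the dual graph, leaf prism, gadget insertion into the seamless rectangle $s$, then re-folding $(C',A')$---and every step of it except the gadget is verbatim from Theorem~\theoref{rectangle seam}. But the box gadget is the \emph{entire} new content of the polyprism generalization, and you only postulate it: you correctly isolate conditions (i)--(iii), yet construct nothing, and without even the gadget's dimensions as a function of $a,b,c$ the induction and the size bound cannot be run. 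That is a genuine gap, not a deferrable detail.

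The quantitative part of your argument is also wrong as stated, and it hides an ambiguity that decides whether the theorem is even true. ``Three rows and columns per step on average'' holds only for unit cubes; an $a \times b \times c$ box must draw four walls of height $c$ (and a top) from its halo, so each insertion costs on the order of $c$ rows and columns, not $O(1)$. Hence the bound is coherent only if $n$ denotes the total volume (number of unit cubes): if $n$ counts prisms, the claim is refuted by the paper's own closing diameter argument, since a single $k \times 1 \times 1$ prism has $n=1$ and diameter at least $k$, but would be allotted a square of side $5$, and folding only decreases diameter. Under the volume reading your accounting can be repaired---charge each box's roughly $3c_i$ added rows-plus-columns (after the alternation trick) against its volume $a_i b_i c_i \ge c_i$---but you never make this charging argument. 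Moreover, under that same reading the statement as literally written needs no new gadget at all: subdivide every prism into unit cubes (the dual graph stays connected) and apply Theorem~\theoref{cube}, whose alternating construction already yields side $3n+1$ for even $n$ and $3n+2$ for odd $n$ before the paper rounds up to $3n+2$. What subdivision does \emph{not} give is seamlessness of the large prism faces, which is where your gadget would earn its keep---and that, together with the global overlap-order consistency you yourself flag in (iii) (the analogue of the row-$i$/column-$j$ folds acting on three layers, now with $\Theta(c)$-layer halos interleaving with all previously committed layers), is exactly what your proposal leaves unproven.
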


\section{Variations}
\xxx{complete draft: this is more of notes of what to write later than anything else. it would be much more elegant if we first introduced the 22.5 degree version, and then modify the bounds for box pleating.}
\subsection{Folding Polycubes when allowing 22.5 degree angles}
If the one allows additional crease angles such as those shown in Figure \xxx{add figure}, then additional paper is not needed to create `buffer' regions. This decreases both the total number of needed tiles and the total number of layers.
\subsection{Folding Polycubes Using Paper with Slits}
Similarly, if the folded paper has a pattern of slits such as that shown in Figure \xxx{add figure}, then additional paper is not needed to create the `buffer' regions, having the same effect as 22.5 degree angles on all bounds. However, this introduces many `seams' into the paper, where slits lie along faces. Ensuring that all faces are seamless requires a custom slit pattern for a particular polycube, or a different general pattern with additional paper and layers. An algorithm for coming up with an optimal general slit pattern given a particular set of polycube to be folded is an open problem.

\fi

\bibliography{proof}
\bibliographystyle{alpha}

\end{document}